\newtheorem{theorem}{Theorem}
\newtheorem{remark}{Remark}
\newtheorem{lemma}{Lemma}
\newtheorem{definition}{Definition}
\begin{document}

\title{Determination of the Autocorrelation Distribution and 2-Adic Complexity of Generalized Cyclotomic Binary Sequences of Order 2 with Period $pq$\thanks{The work was supported by the State Key Program of National Natural Science of China under Grant 12031011 and the National Natural Science Foundation of China under Grant 11701553.}}

\author{Xiaoyan Jing, \thanks{Xiaoyan Jing is with the Research Center for Number Theory and Its Applications, Northwest University, Xi'an 710127, China (Email: jxymg@126.com).}
Shiyuan Qiang,\thanks{Shiyuan Qiang is with the department of Applied Mathematics, China Agricultural University, Beijing 100083, China (E-mail: qsycau\_18@163.com).}  Minghui Yang\thanks{Minghui Yang is with the State Key Laboratory of Information Security, Institute of Information Engineering, Chinese Academy of Sciences, Beijing 100093, China (Email: yangminghui6688@163.com).} and
Keqin Feng \thanks{Keqin Feng is with the department of Mathematical Sciences, Tsinghua University, Beijing, 100084, China (Email: fengkq@mail.tsinghua.edu.cn).}
}

\date{\today}
\maketitle

\begin{abstract}
The generalized cyclotomic binary sequences $S=S(a, b, c)$ with period $n=pq$ have good autocorrelation property where $(a, b, c)\in \{0, 1\}^3$ and $p, q$ are distinct odd primes. For some cases, the sequences $S$ have ideal or optimal autocorrelation. In this paper we determine the autocorrelation distribution and 2-adic complexity of the sequences $S=S(a, b, c)$ for all $(a, b, c)\in \{0, 1\}^3$ in a unified way by using group ring language and a version of quadratic Gauss sums valued in group ring $R=\mathbb{Z}[\Gamma]$ where $\Gamma$ is a cyclic group of order $n$.
\end{abstract}

\begin{IEEEkeywords}
2-adic complexity, binary sequences, autocorrelation distribution, Gauss sums, stream cipher
\end{IEEEkeywords}

\section{Introduction} \label{sec-intro}
~\\

Let $p$ and $q$ be two distinct odd primes, $Z_{pq}=\mathbb{Z}/ pq\mathbb{Z}$. The unit group of the ring $Z_{pq}$ is
\begin{eqnarray*}
Z_{pq}^{*}
&=&  \left\{ a \pmod {pq}: \gcd(a, pq)=1\right\}\\
&=& \left\{ ip+jq \pmod {pq}: 1\leq i \leq q-1, 1\leq j \leq p-1\right\}.
\end{eqnarray*}

Let $P=\{p, 2p, \ldots, (q-1)p\}$, $Q=\{q, 2q, \ldots, (p-1)q\}$. Then $Z_{pq}=Z_{pq}^{\ast}\bigcup P \bigcup Q \bigcup \{0\}$.
The generalized cyclotomic binary sequence of order 2 with period $n=pq$ is $S(a, b, c)=S=\{s_\lambda\}_{\lambda=0}^{n-1}$, $s_\lambda\in \mathbb{F}_2=\{0, 1\}$ defined by
\begin{eqnarray}\label{sequence}
s_\lambda=\left\{ \begin{array}{ll}
c,                             & \mbox{ if $\lambda=0$} \\
a, & \mbox{ if $\lambda\in P$} \\
b, & \mbox{ if $\lambda\in Q$} \\
\frac{1}{2}\big(1-\big(\frac{\lambda}{p}\big)\big(\frac{\lambda}{q}\big)\big), & \mbox{ if $\lambda\in Z_{pq}^{*}$ }
\end{array}
\right.
\end{eqnarray}
where $(\frac{\cdot}{\cdot})$ denotes the Legendre symbol and $a, b, c\in \mathbb{F}_2.$ The autocorrelation value of $S=S(a, b, c)$ is defined by
$$C_S(\tau)=\sum_{\lambda=0}^{n-1}(-1)^{s_\lambda+s_{\lambda+\tau}}\in \mathbb{Z}~(0\leq\lambda\leq n-1)$$
and $C_S(0)=n$ is called the trivial autocorrelation of $S$. For many communication applications such as radar distance range, CDMA communication systems and generating key stream sequences in stream cipher encryption, we need the binary sequence $S$ having small absolute values $|C_S(\tau)|$ for all nontrivial autocorrelation
$(1\leq\tau\leq n-1)$. The linear feedback shift register (LFSR) and feedback with carry shift register (FCSR) are two kinds of fast key stream generators. The linear complexity (resp. 2-adic complexity) of the sequence measures the length of the shortest LFSR (resp. FCSR) which can generate the sequence. According to the Berlekamp-Massey algorithm \cite{M} (resp. rational approximation algorithm  \cite{K1}), the linear complexity (resp. 2-adic complexity) of a safe sequence should exceed half of its period.

The autocorrelation values of $S=S(a, b, c)$ have been computed in [3, 11] for several cases. Particularly, $S=S(a, b, c)$ has ideal autocorrelation value for
$q=p+2$ and $(a, b, c)=(1, 0, 0)$ or $(0, 1, 1)$, namely, $C_S(\tau)=-1$ for all $1\leq \tau\leq pq-1$, and has optimal $3$-valued autocorrelation for $q=p+4$
and $(a, b, c)=(1, 0, 0)$ or $(0, 1, 1)$, namely, $C_S(\tau)=1$ or $-3$ for $1\leq\tau\leq pq-1.$

 The linear complexity of the binary sequence $S=S(a, b, c)$ has been determined in [1, 4]. In this paper, our main aim is to determine the 2-adic complexity of the binary sequence $S=S(a, b, c)$. But before doing this we compute the autocorrelation distribution of $S$ in Section \ref{sec-auto}. This arrangement is made for several reasons. Firstly, we get the autocorrelation distribution of $S=S(a, b, c)$ for all cases $(a, b, c)\in \mathbb{F}_2^3$ by a unified method. The result shows that for all cases of $(a, b, c)$, the absolute values of all nontrivial autocorrelation  $|C_S(\tau)|~ (1\leq\tau\leq pq-1)$ are at most max$\{|q-p|+3, 9\}$ which means that the sequences $S=S(a, b, c)$ have nice autocorrelation property provided $|q-p|$ is small. Secondly, we use the group ring $R=\mathbb{Z}[\Gamma]$ language where $\Gamma=\langle x|x^n-1\rangle=\{1_\Gamma, x, \ldots, x^{n-1}\}$ is the multiplicative cyclic group of order $n=pq$
 generated by $x$ which makes the computation clearer and easier than the previous works [2, 3]. Thirdly, we use the same group ring method to determine the exact value of the 2-adic complexity of $S=S(a, b, c)$ for all $(a, b, c)\in \mathbb{F}_2^3$ in Section \ref{complexity}, comparing to previous works [5, 10, 13] where the authors use the methods given in [6, 14] and other tools as cyclotomy to get a lower bound of the 2-adic complexity of $S$ in general case. At last, we believe that the group ring method can also be used to determine the linear complexity of  $S=S(a, b, c)$ in a unified way for all  $(a, b, c)\in \mathbb{F}_2^3$ by using group ring $\mathbb{F}_2[\Gamma]$ instead of $\mathbb{Z}[\Gamma]$, but need more computation to determine if $S(\omega)$ is zero where $S(x)=\sum_{\lambda=0}^{pq-1}s_\lambda x^\lambda\in \mathbb{F}_2[x]$ and $\omega$ pasts through all $pq$-th roots of unity in the extension field of $\mathbb{F}_2$. The result would be complicated to state as shown in \cite{D1}, we do not touch it in this paper.

 \section{Autocorrelation Distribution}\label{sec-auto}
~\\

In this section we determine the autocorrelation distribution
$$C_s(\tau)\ (0\leq\tau\leq n-1)$$
of the binary sequences $S=S(a, b, c)$ for all $(a, b, c)\in \mathbb{F}_2^3$ with period $n=pq$ defined by (\ref{sequence}) in Section \ref{sec-intro}.

We use the group ring $R=\mathbb{Z}[\Gamma]$ where $\Gamma=\langle x|x^n-1\rangle=\{1_\Gamma, x, x^2, \ldots, x^{n-1}\}$ is the multiplicative cyclic group of order $n=pq$ generated by $x$. For a subset $S$ of $\Gamma$, we identify $S$ as an element $\sum_{g\in S}g$ in the group ring $\mathbb{Z}[\Gamma].$ Let
$$S(x)=S_{a, b, c}(x)=\sum_{\lambda=0}^{n-1}(-1)^{s_\lambda}x^\lambda\in R,$$
we have
\begin{align*}
S(x^{-1})S(x)&=\sum_{\lambda, \mu=0}^{n-1}(-1)^{s_\lambda+s_\mu}x^{\lambda-\mu}=\sum_{\tau=0}^{n-1}x^\tau\sum_{\mu=0}^{n-1}(-1)^{s_{\mu+\tau}+s_\mu}\notag\\
&=\sum_{\tau=0}^{n-1}C_s(\tau)x^\tau\in R.
\end{align*}
On the other hand, by the definition (\ref{sequence}) of the binary sequence $S=S(a, b, c)$,
$$S(x)=(-1)^c\cdot 1_{\Gamma}+(-1)^a(\Gamma_p-1_\Gamma)+(-1)^b(\Gamma_q-1_\Gamma)+\sum_{\lambda\in Z_{pq}^{\ast}}(\frac{\lambda}{p})(\frac{\lambda}{q})x^\lambda$$
where $\Gamma_p=\sum_{i=0}^{q-1}x^{ip}$, $\Gamma_q=\sum_{j=0}^{p-1}x^{jq}$ can be identified as subgroups $P\bigcup\{1_\Gamma\}$ and  $Q\bigcup\{1_\Gamma\}$ of $\Gamma$, and remark that for $\lambda\in Z_{pq}^{\ast}$ and $s_\lambda=\frac{1}{2}(1-(\frac{\lambda}{p})(\frac{\lambda}{q}))$, then $(-1)^{s_\lambda}=(\frac{\lambda}{p})(\frac{\lambda}{q})$. The last summation is
$$\sum_{\lambda\in Z_{pq}^{\ast}}(\frac{\lambda}{p})(\frac{\lambda}{q})x^\lambda=\sum_{i=1}^{q-1}\sum_{j=1}^{p-1}(\frac{jq}{p})(\frac{ip}{q})x^{ip+jq}=G_p(x)G_q(x)$$
where $$G_p(x)=\sum_{j=1}^{p-1}(\frac{jq}{p})x^{jq}\in R,~G_q(x)=\sum_{i=1}^{q-1}(\frac{ip}{q})x^{ip}\in R.$$
Therefore
\begin{align}
S(x)&=e\cdot 1_\Gamma+(-1)^a\Gamma_p+(-1)^b\Gamma_q+G_p(x)G_q(x)\notag\\
&=H+G_p(x)G_q(x)\label{e2}
\end{align}
where $e=(-1)^c-(-1)^a-(-1)^b\in \mathbb{Z}$, $H=e+(-1)^a\Gamma_p+(-1)^b\Gamma_q.$

Since $\sigma: \Gamma\rightarrow\Gamma$, $\sigma(g)=g^{-1}$ is an isomorphism of the group $\Gamma$, we have
$\sigma(\Gamma_p)=\Gamma_p$, $\sigma(\Gamma_q)=\Gamma_q$
and
$$\sigma(G_p(x))=\sum_{j=1}^{p-1}(\frac{jq}{p})x^{-jq}=\sum_{j=1}^{p-1}(\frac{-jq}{p})x^{jq}=(\frac{-1}{p})G_p(x),$$
$$\sigma(G_q(x))=(\frac{-1}{q})G_q(x).$$
We get, by (\ref{e2}),
\begin{align}
S(x^{-1})=\sigma(S(x))=H+(\frac{-1}{p})(\frac{-1}{q})G_p(x)G_q(x)\label{e3}
\end{align}
In order to compute $S(x)S(x^{-1})$ in $R$, we use the following lemma which particularly shows that $G_p(x)$ and $G_q(x)$ can be viewed as versions of quadratic Gauss sums over $\mathbb{F}_p$ and $\mathbb{F}_q$ respectively valued in $R$.
\begin{lemma}\label{le1}
In the group ring $R=\mathbb{Z}[\Gamma]$,\\
(A). $G_p^2(x)=(\frac{-1}{p})(p\cdot 1_{\Gamma}-\Gamma_q)$, $G_q^2(x)=(\frac{-1}{q})(q\cdot 1_{\Gamma}-\Gamma_p)$.\\
(B). $\Gamma_pG_q(x)=\Gamma_qG_p(x)=0, \Gamma_p\Gamma_q=\Gamma$.
\end{lemma}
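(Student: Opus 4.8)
The plan is to treat $G_p(x)$ and $\Gamma_q$ as elements of the group ring $\mathbb{Z}[\langle x^q\rangle]$ of the order-$p$ cyclic subgroup $\langle x^q\rangle=\{1_\Gamma, x^q, \ldots, x^{(p-1)q}\}$, and dually $G_q(x)$, $\Gamma_p$ as elements of $\mathbb{Z}[\langle x^p\rangle]$ for the order-$q$ subgroup; all four sums are supported inside these subgroups. For part (A), I would first pull out the constant factor by writing $\left(\frac{jq}{p}\right)=\left(\frac{q}{p}\right)\left(\frac{j}{p}\right)$, so that $G_p(x)=\left(\frac{q}{p}\right)\sum_{j=1}^{p-1}\left(\frac{j}{p}\right)x^{jq}$ and hence $G_p^2(x)=\sum_{j,k=1}^{p-1}\left(\frac{j}{p}\right)\left(\frac{k}{p}\right)x^{(j+k)q}$, the leading factor squaring to $1$. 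I would then collect the coefficient of each power $x^{tq}$ ($t\in\mathbb{Z}/p\mathbb{Z}$), which amounts to evaluating the character sum $\sum_{j}\left(\frac{j}{p}\right)\left(\frac{t-j}{p}\right)$ over $j\in\mathbb{Z}/p\mathbb{Z}$ (the restriction to nonzero $j,k$ is automatic, since the Legendre symbol vanishes at $0$).

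The heart of the argument is this last sum, which is the group-ring incarnation of the classical Gauss-sum identity $g^2=\left(\frac{-1}{p}\right)p$. For $t=0$ it equals $\left(\frac{-1}{p}\right)(p-1)$, while for $t\neq 0$ the standard evaluation of $\sum_{j}\left(\frac{-j^2+tj}{p}\right)$ (a quadratic in $j$ with nonzero discriminant $t^2$) gives $-\left(\frac{-1}{p}\right)$. Feeding these back, the $t=0$ term contributes $\left(\frac{-1}{p}\right)(p-1)\cdot 1_\Gamma$ and the remaining terms contribute $-\left(\frac{-1}{p}\right)(\Gamma_q-1_\Gamma)$, because $\sum_{t=0}^{p-1}x^{tq}=\Gamma_q$. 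Combining yields $G_p^2(x)=\left(\frac{-1}{p}\right)(p\cdot 1_\Gamma-\Gamma_q)$, and interchanging the roles of $p$ and $q$ gives the companion formula for $G_q^2(x)$.

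For part (B) I would use that $\Gamma_p$ is exactly the group sum $\sum_{g\in\langle x^p\rangle}g$ of the order-$q$ subgroup, so multiplication by $\Gamma_p$ acts on any element of $\mathbb{Z}[\langle x^p\rangle]$ by scaling it by the sum of its coefficients. Since the coefficient sum of $G_q(x)$ is $\left(\frac{p}{q}\right)\sum_{i=1}^{q-1}\left(\frac{i}{q}\right)=0$, we get $\Gamma_p G_q(x)=0$, and symmetrically $\Gamma_q G_p(x)=0$. Finally $\Gamma_p\Gamma_q=\sum_{i=0}^{q-1}\sum_{j=0}^{p-1}x^{ip+jq}$, and by the Chinese Remainder Theorem the map $(i,j)\mapsto ip+jq\pmod{pq}$ is a bijection onto $\mathbb{Z}/pq\mathbb{Z}$ (reduce modulo $p$ and modulo $q$ separately), so every power $x^\ell$ occurs exactly once and $\Gamma_p\Gamma_q=\Gamma$.

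The only genuinely delicate point is the $t\neq 0$ character-sum evaluation in the second paragraph; everything else is bookkeeping with group-sum identities and CRT. I expect that step to be the main obstacle, and would justify it either by the substitution $j\mapsto tj$, which reduces it to $\sum_{j}\left(\frac{j}{p}\right)\left(\frac{1-j}{p}\right)$, or by directly invoking the standard formula for $\sum_{j}\left(\frac{aj^2+bj+c}{p}\right)$ with $a\neq 0$, whose value is $-\left(\frac{a}{p}\right)$ when the discriminant is nonzero.
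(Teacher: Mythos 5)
Your proposal is correct and takes essentially the same route as the paper: your part (A) is the standard evaluation of the squared quadratic Gauss sum, and your rescaling $j\mapsto tj$ (or the complete quadratic character sum formula) is just the paper's substitution $\lambda=\mu\tau$ applied after, rather than before, collecting the coefficient of $x^{tq}$. Part (B) likewise matches the paper's argument, which uses $g\Gamma_p=\Gamma_p$ for $g\in\langle x^p\rangle$ (your ``scale by the coefficient sum'' observation, with coefficient sum $0$ for $G_q(x)$) and the CRT bijection $(i,j)\mapsto ip+jq \pmod{pq}$ for $\Gamma_p\Gamma_q=\Gamma$.
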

\begin{proof}
(A). From the definition of $G_p(x)$ we have
\begin{align*}
G_p^2(x)&=\sum_{\lambda, \mu=1}^{p-1}(\frac{\lambda\mu}{p})x^{q(\lambda+\mu)} ~(\rm{let}~ \lambda=\mu\tau)\\
&=\sum_{\tau=1}^{p-1}(\frac{\tau}{p})\sum_{\mu=1}^{p-1}x^{q\mu(\tau+1)}\\
&=(\frac{-1}{p})(p-1)\cdot 1_\Gamma+\sum_{\tau=1}^{p-2}(\frac{\tau}{p})\sum_{\mu=1}^{p-1}x^{q\mu}\\
&=(\frac{-1}{p})(p-1)\cdot 1_\Gamma+(-(\frac{-1}{p}))(\Gamma_q-1_\Gamma)\\
&=(\frac{-1}{p})(p\cdot1_\Gamma-\Gamma_q)
\end{align*}
Similarly we have $G_q^2(x)=(\frac{-1}{q})(q\cdot 1_\Gamma-\Gamma_p).$

(B). Since $n=pq$ is a product of distinct primes $p$ and  $q$, we get $\Gamma_p\Gamma_q=\Gamma_p\times \Gamma_q$ (direct product)$=\Gamma$. From $\Gamma_p=\sum_{i=0}^{q-1}x^{pi}$ we have $x^{pi}\Gamma_p=\Gamma_p$ for each $i$. Therefore
$$\Gamma_pG_q=\sum_{i=1}^{q-1}(\frac{ip}{q})x^{pi}\Gamma_p=\sum_{i=1}^{q-1}(\frac{ip}{q})\Gamma_p=0.$$
Similarly we have $\Gamma_qG_p=0.$
\end{proof}
By Lemma \ref{le1} and (\ref{e2}), (\ref{e3}), we get
\begin{align}
{S(x^{-1})}S(x)&=(H+G_p(x)G_q(x))(H+(\frac{-1}{p})(\frac{-1}{q})G_p(x)G_q(x))\notag\\
&=H^2+(1+(\frac{-1}{p})(\frac{-1}{q}))HG_p(x)G_q(x)+(p\cdot1_\Gamma-\Gamma_q)(q\cdot1_\Gamma-\Gamma_p)\label{e4}
\end{align}
and
\begin{align*}
H^2&=(e\cdot 1_\Gamma+(-1)^a\Gamma_p+(-1)^b\Gamma_q)^2~ (e=(-1)^c-(-1)^a-(-1)^b)\\
&=e^2\cdot1_\Gamma+\Gamma_p\Gamma_p+\Gamma_q\Gamma_q+2(e(-1)^a\Gamma_p+e(-1)^b\Gamma_q+(-1)^{a+b}\Gamma_p\Gamma_q)\\
&=e^2\cdot1_\Gamma+q\Gamma_p+p\Gamma_q+2(e(-1)^a\Gamma_p+e(-1)^b\Gamma_q+(-1)^{a+b}\Gamma)~(\Gamma_p\Gamma_p=|\Gamma_p|\cdot\Gamma_p=q\Gamma_p)
\end{align*}
$$HG_p(x)G_q(x)=eG_p(x)G_q(x)$$
$$(p\cdot 1_\Gamma-\Gamma_q)(q\cdot1_\Gamma-\Gamma_p)=pq\cdot1_\Gamma-p\Gamma_p-q\Gamma_q+\Gamma$$
Then formula (\ref{e4}) becomes
\begin{align}
S(x^{-1})S(x)&=(pq+e^2)\cdot 1_\Gamma+(q-p+2e(-1)^a)\Gamma_p+(p-q+2e(-1)^b)\Gamma_q\notag\\
&~~~+(1+2(-1)^{a+b})\Gamma+e(1+(\frac{-1}{p})(\frac{-1}{q}))G_p(x)G_q(x)
\end{align}
Since $G_p(x)G_q(x)=\sum_{\lambda\in Z_{pq}^{\ast}}(\frac{\lambda}{p})(\frac{\lambda}{q})x^\lambda$, $\sum_{\lambda\in P}x^\lambda=\Gamma_p-1_\Gamma$, $\sum_{\lambda\in Q}x^\lambda=\Gamma_q-1_\Gamma$ and
$$\Gamma=1_\Gamma+(\Gamma_p-1_\Gamma)+(\Gamma_q-1_\Gamma)+\Gamma^{\ast}$$
where $\Gamma^{\ast}=\sum_{\lambda\in Z_{pq}^{\ast}}x^\lambda$, we get
\begin{align*}
S(x^{-1})S(x)&=pq+(q-p+2e(-1)^a)(\Gamma_p-1_\Gamma)+(p-q+2e(-1)^b)(\Gamma_q-1_\Gamma)\\
&~~~+(1+2(-1)^{a+b})[(\Gamma_p-1_\Gamma)+(\Gamma_q-1_\Gamma)+\Gamma^{\ast}]+e(1+(\frac{-1}{p})(\frac{-1}{q}))G_p(x)G_q(x)\\
&=pq+(q-p+2(-1)^{a+c}-1)(\Gamma_p-1_\Gamma)+(p-q+2(-1)^{b+c}-1)(\Gamma_q-1_\Gamma)\\
&~~~+\sum_{\lambda\in Z_{pq}^{\ast}}[e(1+(\frac{-1}{p})(\frac{-1}{q}))(\frac{\lambda}{p})(\frac{\lambda}{q})+1+2(-1)^{a+b}]x^\lambda.
\end{align*}

Therefore we get the following result on the autocorrelation distribution of the binary sequences $S=S(a, b, c).$
\begin{theorem}\label{th1}
Let $(a, b, c)\in\mathbb{F}_2^3$ and $S=S(a, b, c)$ be the binary sequence defined by (\ref{sequence}) with period $n=pq$. Then for $0\leq\tau\leq n-1$,
\begin{eqnarray*}
C_S(\tau)=\left\{ \begin{array}{ll}
pq,                             & \mbox{for $\tau=0$} \\
(q-p)+2(-1)^{a+c}-1, & \mbox{for $\tau\in P$} \\
(p-q)+2(-1)^{b+c}-1, & \mbox{for $\tau\in Q$}\\
1+2(-1)^{a+b}+((-1)^c-(-1)^a-(-1)^b)(1+(\frac{-1}{p})(\frac{-1}{q}))(\frac{\tau}{p})(\frac{\tau}{q}), & \mbox{ for $\tau\in Z_{pq}^{*}$ }
\end{array}
\right.
\end{eqnarray*}
\end{theorem}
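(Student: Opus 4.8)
The plan is to obtain the autocorrelation values as the coefficients of the group ring element $S(x^{-1})S(x)$ in the basis $\{1_\Gamma, x, \ldots, x^{n-1}\}$. The opening identity of the section gives $S(x^{-1})S(x) = \sum_{\tau=0}^{n-1} C_S(\tau)\,x^\tau$, so $C_S(\tau)$ is precisely the coefficient of $x^\tau$. Hence it suffices to evaluate this product explicitly and then extract, for each $\tau$, the coefficient attached to it.

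The computational core is the evaluation of $S(x^{-1})S(x)$, which I would carry out from the decompositions $S(x) = H + G_p(x)G_q(x)$ in (\ref{e2}) and $S(x^{-1}) = H + (\frac{-1}{p})(\frac{-1}{q})G_p(x)G_q(x)$ in (\ref{e3}), together with Lemma~\ref{le1}. Expanding the product, the two cross terms combine to $(1 + (\frac{-1}{p})(\frac{-1}{q}))H G_p(x)G_q(x)$, which reduces to $e(1 + (\frac{-1}{p})(\frac{-1}{q}))G_p(x)G_q(x)$ because the identities $\Gamma_p G_q = \Gamma_q G_p = 0$ of Lemma~\ref{le1}(B) annihilate the $\Gamma_p$ and $\Gamma_q$ parts of $H$; the top term becomes $(p\cdot 1_\Gamma - \Gamma_q)(q\cdot 1_\Gamma - \Gamma_p)$ after substituting $G_p^2$ and $G_q^2$ from Lemma~\ref{le1}(A) and using that $(\frac{-1}{p})(\frac{-1}{q})$ squares to $1$. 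Expanding $H^2$ via $\Gamma_p\Gamma_p = q\Gamma_p$, $\Gamma_q\Gamma_q = p\Gamma_q$, $\Gamma_p\Gamma_q = \Gamma$ then assembles everything into a single $R$-linear combination of $1_\Gamma$, $\Gamma_p$, $\Gamma_q$, $\Gamma$, and $G_p(x)G_q(x)$.

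The final step converts this into coefficients over the four orbits. Using $\Gamma_p - 1_\Gamma = \sum_{\lambda\in P} x^\lambda$, $\Gamma_q - 1_\Gamma = \sum_{\lambda\in Q} x^\lambda$, the decomposition $\Gamma = 1_\Gamma + (\Gamma_p - 1_\Gamma) + (\Gamma_q - 1_\Gamma) + \Gamma^{\ast}$, and $G_p(x)G_q(x) = \sum_{\lambda\in Z_{pq}^{*}} (\frac{\lambda}{p})(\frac{\lambda}{q})x^\lambda$, I would regroup all contributions according to whether the exponent is $0$, lies in $P$, lies in $Q$, or lies in $Z_{pq}^{*}$. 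Because these four sets partition $\{0, 1, \ldots, n-1\}$, the supports are pairwise disjoint and the coefficient of each $x^\tau$ is unambiguous, which yields the four cases of the theorem.

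I expect the only real difficulty to be bookkeeping: tracking the signs $(-1)^a$, $(-1)^b$, $(-1)^c$ and the factors $(\frac{-1}{p})$, $(\frac{-1}{q})$ through the expansion, and correctly folding the constant $e = (-1)^c - (-1)^a - (-1)^b$ together with the $\pm 1$ contributions from $\Gamma_p$, $\Gamma_q$, $\Gamma$ so that the coefficients on $P$ and $Q$ collapse to $(q-p) + 2(-1)^{a+c} - 1$ and $(p-q) + 2(-1)^{b+c} - 1$. Conceptually there is nothing left to prove once Lemma~\ref{le1} and the automorphism $\sigma(g) = g^{-1}$ are available; they carry all the weight, and the theorem is a pure coefficient extraction.
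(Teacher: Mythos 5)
Your proposal is correct and follows essentially the same route as the paper: the coefficient identity $S(x^{-1})S(x)=\sum_{\tau}C_S(\tau)x^{\tau}$, the decompositions (\ref{e2}) and (\ref{e3}), Lemma~\ref{le1} to reduce the cross terms to $e\bigl(1+(\frac{-1}{p})(\frac{-1}{q})\bigr)G_p(x)G_q(x)$ and the top term to $(p\cdot 1_\Gamma-\Gamma_q)(q\cdot 1_\Gamma-\Gamma_p)$, and the final regrouping over the partition $\{0\}\cup P\cup Q\cup Z_{pq}^{*}$ are exactly the paper's steps. Nothing is missing; the remaining work is the sign bookkeeping you correctly identified, which the paper carries out in the same way.
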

\begin{remark}
(A) If $q-p\equiv 2\pmod 4$, then $n=pq\equiv p(p+2)\equiv 3\pmod 4$ and  $(\frac{-1}{p})(\frac{-1}{q})=-1.$ By Theorem \ref{th1}, the nontrivial autocorrelation of $S=S(a, b, c)$ takes values $q-p+2(-1)^{a+c}-1$, $p-q+2(-1)^{b+c}-1$ and $1+2(-1)^{a+b}$. Particularly,

(A.1) If $q=p+2$, and $(a, b, c)=(1, 0, 0)$ or $(0, 1, 1)$, the sequence $S=S(a, b, c)$ has ideal autocorrelation which means that $C_S(\tau)=-1$ for
all $1\leq\tau\leq n-1.$

(A.2) If $q=p+2$ and $(a, b, c)=(0, 0, 0)$ or $(1, 1, 1)$, the sequence $S=S(a, b, c)$ has autocorrelation $C_S(\tau)=-1$ or
$3$ for $1\leq\tau\leq n-1$.

(B) If $q-p\equiv 0\pmod 4$, then $n=pq\equiv 1\pmod 4$ and $(\frac{-1}{p})(\frac{-1}{q})=1$. By Theorem \ref{th1}, the nontrivial autocorrelation of
$S=S(a, b, c)$ takes values $q-p+2(-1)^{a+c}-1$, $p-q+2(-1)^{b+c}-1$ and $1+2(-1)^{a+b}\pm2((-1)^c-(-1)^a-(-1)^b)$. Particularly, if $q=p+4$ and $(a, b, c)=(1, 0, 0)$ or $(0, 1, 1)$, the sequence $S=S(a, b, c)$ has optimal autocorrelation which means that $C_S(\tau)=1$ or $-3$ for $1\leq\tau\leq n-1.$
\end{remark}
 \section{2-Adic Complexity}\label{complexity}
~\\

In this section we determine the exact value of the 2-adic complexity of binary sequences $S=S(a, b, c)$ for all $(a, b, c)\in \{0, 1\}^3.$
\begin{definition}\cite{K}
For a binary sequence $S=\{s_\lambda\}_{\lambda=0}^{n-1}$ with period  $n$, and $s_\lambda\in\{0, 1\}$. Let $T(2)=\sum_{\lambda=0}^{n-1}s_\lambda 2^\lambda\in \mathbb{Z}$. The 2-adic complexity of $S$ is defined by
$$A_S(2)=\log_2\frac{2^n-1}{d}$$
where $d=\gcd(T(2), 2^n-1)$.
\end{definition}
Now we consider $S=S(a, b, c)$. Let $T(x)=\sum_{\lambda=0}^{n-1}s_\lambda x^\lambda\in R$. Then
$$2T(x)=\sum_{\lambda=0}^{n-1}(1-(-1)^{s_\lambda})x^\lambda=\sum_{\lambda=0}^{n-1}x^\lambda-\sum_{\lambda=0}^{n-1}(-1)^{s_\lambda}x^\lambda=\sum_{\lambda=0}^{n-1}x^\lambda-S(x)$$
The above equality in $R=\mathbb{Z}[\Gamma]$ can be written as the following form
$$2T(x)\equiv \sum_{\lambda=0}^{n-1}x^\lambda-S(x)\equiv-S(x)+\frac{x^n-1}{x-1}\pmod{x^n-1}~(n=pq)$$
By taking $x=2$, we get
$$2T(2)\equiv -S(2)+2^n-1\equiv -S(2)\pmod{2^n-1}$$
Therefore $d=\gcd(T(2), 2^n-1)=\gcd(S(2), 2^n-1)$. For the equality (\ref{e2}) in $R$, by taking $x=2$ we get
\begin{align}
S(2)&\equiv e+(-1)^b\sum_{j=0}^{p-1}2^{qj}+(-1)^a\sum_{i=0}^{q-1}2^{pi}+G_p(2)G_q(2)\pmod {2^n-1}\notag\\
&\equiv e+(-1)^b\frac{2^n-1}{2^q-1}+(-1)^a\frac{2^n-1}{2^p-1}+G_p(2)G_q(2)\pmod {2^n-1} \label{e6}
\end{align}
where $e=(-1)^c-(-1)^a-(-1)^b$ and
$$G_p(2)=\sum_{j=1}^{p-1}(\frac{qj}{p})2^{qj},~G_q(2)=\sum_{i=1}^{q-1}(\frac{pi}{q})2^{pi}\in \mathbb{Z}.$$
Similarly, by taking $x=2$ in equality (\ref{e3}), we get
\begin{align}
S(2^{-1})\equiv e+(-1)^b\frac{2^n-1}{2^q-1}+(-1)^a\frac{2^n-1}{2^p-1}+(\frac{-1}{p})(\frac{-1}{q})G_p(2)G_q(2)\pmod{2^n-1} \label{e7}
\end{align}

Let $d_p=\gcd(S(2), 2^p-1)$, $d_q=\gcd(S(2), 2^q-1)$, $d^{\ast}=\gcd(S(2), \frac{2^n-1}{(2^p-1)(2^q-1)})$.
\begin{lemma}\label{le2}
(1).\ $d^*=1,\ d_p=\gcd(e+(-1)^aq,2^p-1),\ d_q=\gcd(e+(-1)^bq,2^q-1)$, where $e=(-1)^c-(-1)^a-(-1)^b.$

(2).\ If $4p>q+1$, then $d_p=1$.\ If $4q>p+1$, then $d_q=1$.

(3).\ $d(=\gcd(S(2),2^n-1))=\max(d_p,d_q)$.\ Particularly, if $16p>4q+4>p+5$, then $d=1$ and the binary sequence $S=S(a,b,c)$ has the best $2$-adic complexity $\log_2(2^{pq}-1).$
\end{lemma}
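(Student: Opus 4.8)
The plan is to read off $d=\gcd(S(2),2^n-1)$ from the factorization $2^n-1=(2^p-1)(2^q-1)M$, where $M=\frac{2^n-1}{(2^p-1)(2^q-1)}$. A prime $\ell\mid 2^p-1$ has $\ord_\ell(2)=p$, a prime $\ell\mid 2^q-1$ has order $q$, and a prime $\ell\mid M$ has order $pq$ and hence $\ell\ge pq+1$, the sole exception being $\ell=\max(p,q)$. Since $\gcd(2^p-1,2^q-1)=2^{\gcd(p,q)}-1=1$, and the only prime that $M$ can share with $(2^p-1)(2^q-1)$ is this exceptional one, once I check that that prime does not divide $S(2)$ the three local gcd's multiply, $d=d_pd_qd^{\ast}$. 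The whole computation then reduces to evaluating $S(2)$ (and $S(2^{-1})$) modulo each block, for which (\ref{e6}), (\ref{e7}) and Lemma \ref{le1} are exactly the right tools.

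For part (1) I would first reduce (\ref{e6}) modulo $2^p-1$. Using $2^p\equiv 1$ one gets $\Gamma_p(2)=\frac{2^n-1}{2^p-1}\equiv q$, while $\Gamma_q(2)=\frac{2^n-1}{2^q-1}\equiv 0$ because its exponents $qj\bmod p$ fill a complete residue system, and $G_q(2)\equiv\sum_{i=1}^{q-1}\left(\frac{pi}{q}\right)=0$; hence $G_p(2)G_q(2)\equiv 0$ and $S(2)\equiv e+(-1)^aq\pmod{2^p-1}$, which is the stated value of $d_p$ (and symmetrically $d_q$). For $d^{\ast}$, reducing (\ref{e6}) modulo $M$ annihilates both $\Gamma_p(2)$ and $\Gamma_q(2)$, leaving $S(2)\equiv e+g$ with $g=G_p(2)G_q(2)$, while Lemma \ref{le1}(A) gives $G_p(2)^2\equiv\left(\frac{-1}{p}\right)p$ and $G_q(2)^2\equiv\left(\frac{-1}{q}\right)q$, so that $g^2\equiv\varepsilon pq\pmod M$ where $\varepsilon=\left(\frac{-1}{p}\right)\left(\frac{-1}{q}\right)$. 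If a prime $\ell\mid\gcd(S(2),M)$ then $g\equiv-e$, hence $\ell\mid e^2-\varepsilon pq$; since $e^2\in\{1,9\}<pq$ this is a nonzero integer, and for a generic $\ell$ (with $\ell\ge pq+1$) a size comparison when $\varepsilon=1$ and a parity comparison when $\varepsilon=-1$ both give a contradiction, while the exceptional $\ell=\max(p,q)$ would have to divide $e^2$ and so be $\le 3$, which is impossible. Thus $d^{\ast}=1$.

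For part (2), write $N=e+(-1)^aq$. Every prime $\ell\mid 2^p-1$ satisfies $\ell\equiv 1\pmod{2p}$, so $\ell\ge 2p+1$; the hypothesis $4p>q+1$ then forces $|N|\le q+3<4p+2\le 2\ell$, so that $\ell\mid N$ (with $N\neq 0$, which holds as soon as $q>3$) would require $N=\pm\ell$, and solving $e+(-1)^aq=\pm\ell$ makes $q$ even, a contradiction; hence $d_p=1$, and interchanging $p,q$ gives $d_q=1$ under $4q>p+1$. Finally, part (3) follows formally: the two inequalities $4p\le q+1$ and $4q\le p+1$ cannot both hold (they give $3(p+q)\le 2$), so at least one of $d_p,d_q$ is $1$, whence $d=d_pd_q=\max(d_p,d_q)$; and $16p>4q+4>p+5$ is exactly ``$4p>q+1$ and $4q>p+1$'', giving $d_p=d_q=1$, $d=1$, and $A_S(2)=\log_2(2^{pq}-1)$.

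I expect the proof that $d^{\ast}=1$ to be the main obstacle. Extracting $S(2)\bmod M$ in the clean form $e+g$ relies on the Gauss-sum identities of Lemma \ref{le1} together with the vanishing of $\Gamma_p(2),\Gamma_q(2)$, and the concluding contradiction must be split into the $\varepsilon=\pm1$ cases and, in each, must separately dispose of the exceptional divisor $\max(p,q)$ of $M$. That size-and-parity bookkeeping, together with the analogous argument isolating $N=\pm\ell$ in part (2), is where all the genuine arithmetic content sits; the reductions for $d_p,d_q$ and the combinatorial step in part (3) are then routine.
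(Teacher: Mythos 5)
Your proposal is correct and follows essentially the same route as the paper: reduce $S(2)$ modulo $2^p-1$, $2^q-1$ and $M=\frac{2^n-1}{(2^p-1)(2^q-1)}$ via (\ref{e6}) and Lemma \ref{le1}, rule out $\gcd(S(2),M)>1$ by order-of-$2$ plus size/parity arguments (your pre-classification of the prime divisors of $M$, with the exceptional prime $\max(p,q)$, is just a reorganization of the paper's case split on the order of $2 \bmod \pi$ lying in $\{p,q,pq\}$), and then conclude $d=d_pd_q=\max(d_p,d_q)$ exactly as the paper does. Two minor remarks: your explicit caveat that $N=e+(-1)^aq\neq 0$ requires $q>3$ is actually more careful than the paper, which silently assumes this, and your symmetric reduction correctly yields $d_q=\gcd(e+(-1)^bp,\,2^q-1)$, exposing a typo ($q$ in place of $p$) in the lemma's statement.
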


\begin{proof}
(1).\ Firstly we prove $d^*=1$.\ From (6) we have
$$S(2)\equiv e+G_p(2)G_q(2)\pmod{\frac{2^n-1}{(2^p-1)(2^q-1)}} \quad (n=pq)$$
Let $x=2$ in Lemma 1 (A) we get
$$G_p^2(2)\equiv(\frac{-1}{p})(p-\sum_{i=0}^{p-1}2^{qi})\equiv(\frac{-1}{p})(p-\frac{2^n-1}{2^q-1})\equiv(\frac{-1}{p})p \pmod{\frac{2^n-1}{(2^p-1)(2^q-1)}}$$
and similarly,
$$G_q^2(2)\equiv(\frac{-1}{q})q \pmod{\frac{2^n-1}{(2^p-1)(2^q-1)}}$$
Suppose that $d^*>1.$\ Let $\pi$ be a prime divisor of  $d^*=\gcd(S(2),\frac{2^n-1}{(2^p-1)(2^q-1)})$.\ Then $\pi\mid S(2)$ and $\pi\mid\frac{2^n-1}{(2^p-1)(2^q-1)}$.\ Therefore $2^{pq}\equiv1\pmod{\pi}$, so that the order of $2\pmod{\pi}$ is $p, q$ or $pq$, and
$$0\equiv S(2)\equiv e+G_p(2)G_q(2)\pmod{\pi} $$
Therefore
\begin{align}\label{e8}
e^2\equiv (G_p(2)G_q(2))^2\equiv(\frac{-1}{p})(\frac{-1}{q})pq\pmod{\pi}
\end{align}
which implies $\pi\mid pq\pm e^2$ where $e^2=1$ or $9$.

If $2^p\equiv1\pmod{\pi}$, by Fermat's Theorem we have $p\mid\pi-1$ so that $p\leq\frac{\pi-1}{2}$.\ Since $\pi$ and $p$ are odd primes. Moreover,
$$0\equiv \frac{2^{pq}-1}{(2^p-1)(2^q-1)}\equiv\frac{1}{2^q-1}\sum_{\lambda=0}^{q-1}2^{p\lambda}\equiv\frac{1}{2^q-1}\sum_{\lambda=0}^{q-1}1\equiv\frac{q}{2^q-1}\pmod{\pi}$$
we get $\pi=q$.\ Then by $\pi\mid pq\pm e^2$ we get $\pi\mid e$ so that $\pi=3$.\ On the other hand, $p, q$ and $n=pq$ are odd, we get a contradiction
$$0\equiv \frac{2^n-1}{(2^p-1)(2^q-1)}\equiv 1\pmod{3}$$
Therefore $2^p\not\equiv1\pmod{\pi}.$\ Similarly we have $2^q\not\equiv1\pmod{\pi}.$\ Therefore the order of $2\pmod{\pi}$ is $pq$.\ Then $pq\mid\pi-1$.\ From this and $\pi\mid pq\pm e^2$ we get $\pi\leq\frac{1}{2}(pq+9)$ and $pq\leq\frac{1}{2}(\pi-1).$\ Therefore $4\pi\leq2pq+18\leq2(\pi-1)+18=2\pi+16$, and $\pi\leq8$. From $\frac{1}{2}(\pi-1)\geq pq\geq15$ we get $\pi\geq31$, a contradiction.\ Therefore $d^*$ has no prime divisor.\ Namely, $d^*=1$.

Then we determine $d_p=\gcd(S(2),2^p-1).$\ From (6) we have
$$S(2)\equiv e+(-1)^a\frac{2^{pq}-1}{2^p-1}+G_p(2)G_q(2)\pmod{2^p-1}$$
Since
$$\frac{2^{pq}-1}{2^p-1}\equiv q,\quad G_q(2)=\sum_{i=1}^{q-1}(\frac{pi}{q})2^{pi}\equiv\sum_{i=1}^{q-1}(\frac{pi}{q})\equiv 0\pmod{2^p-1}$$
We get $S(2)\equiv e+(-1)^aq\pmod{2^p-1}$ and $d_p=\gcd(S(2),2^p-1)=\gcd(e+(-1)^aq,2^p-1)$.
Similarly we have $d_q=\gcd(e+(-1)^bp,2^q-1).$

(2).\ Suppose that $d_p>1.$ Let $\pi$ be a prime divisor of $d_p=\gcd(e+(-1)^aq,2^p-1)$.\ Then $p\mid\frac{\pi-1}{2}$, and $\pi\mid\frac{1}{2}(q+(-1)^ae)\leq\frac{1}{2}(q+3)$. Therefore $4p\leq2\pi-2\leq q+3-2=q+1$.\ This means that if $4p>q+1$ then $d_p=1$.\ Similarly, if $4q>p+1$ then $d_q=1$.

(3).\ From $d^*=1$ and $\gcd(2^p-1, 2^q-1)=1$ we get
$$d=\gcd(S(2),2^n-1)=\gcd(S(2),(2^p-1)(2^q-1))=\gcd(S(2),2^p-1)\cdot\gcd(S(2),2^q-1)=d_pd_q$$
Moreover, $4p\leq q+1$ and $4q\leq p+1$ can not be hold simultaneously.\ We know that $d_p=1$ or $d_q=1$.\ Therefore $d=\max(d_p,d_q)$.

\end{proof}

In summary we have the following result.

\begin{theorem}
Let $S=S(a,b,c)$ be the binary sequences with period $n=pq$ defined by (\ref{sequence}), $(a,b,c)\in\{0,1\}^3$.\ Then

(A).\ the $2$-adic complexity of $S$ is
$$A_s(2)=\log_2(\frac{2^n-1}{\max(d_p,d_q)})$$
where $d_p=\gcd(q-1+(-1)^{a+c}-(-1)^{a+b},2^p-1)$, $d_q=\gcd(p-1+(-1)^{b+c}-(-1)^{a+b},2^q-1)$.

(B).\ If $4p>q+1$, then $d_p=1$.\ If $4q>p+1$, then $d_q=1$.\ Particularly, if $16p>4q+4>p+5$, then the $2$-adic complexity of $S(a,b,c)$ reaches the best value $\log_2(2^{n}-1)$ for all $(a,b,c)\in\{0,1\}^3$.
\end{theorem}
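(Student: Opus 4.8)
The plan is to read the theorem off as a direct consequence of Lemma \ref{le2} together with the definition of the 2-adic complexity. Recall from the discussion preceding Lemma \ref{le2} that $d = \gcd(T(2), 2^n-1) = \gcd(S(2), 2^n-1)$, so by the definition the complexity equals $\log_2\frac{2^n-1}{d}$. First I would invoke Lemma \ref{le2}(3), which gives $d = \max(d_p, d_q)$; substituting this into the definition immediately yields $A_S(2) = \log_2\frac{2^n-1}{\max(d_p, d_q)}$, exactly the shape asserted in part (A). All that then remains for (A) is to rewrite the two gcds supplied by Lemma \ref{le2}(1) into the explicit forms stated in the theorem.

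For this rewriting step I would exploit that $(-1)^a$ and $(-1)^b$ are units in $\mathbb{Z}$, so multiplying the first argument of a gcd by $\pm 1$ leaves it unchanged, i.e. $\gcd(m, N) = \gcd(-m, N)$. Starting from $d_p = \gcd(e + (-1)^a q, 2^p-1)$ with $e = (-1)^c - (-1)^a - (-1)^b$, I multiply the first entry by $(-1)^a$ and expand, using $(-1)^{2a}=1$,
$$(-1)^a e + q = (-1)^{a+c} - 1 - (-1)^{a+b} + q = q - 1 + (-1)^{a+c} - (-1)^{a+b},$$
which reproduces the claimed $d_p$. The identical manipulation applied to $d_q = \gcd(e + (-1)^b p, 2^q-1)$, multiplying by $(-1)^b$, gives $p - 1 + (-1)^{b+c} - (-1)^{a+b}$, establishing the stated $d_q$.

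For part (B) the first two implications are verbatim Lemma \ref{le2}(2). The only genuinely new point is that the single double inequality $16p > 4q + 4 > p + 5$ triggers both hypotheses at once: its left half $16p > 4q+4$ rearranges to $4p > q+1$ (forcing $d_p = 1$), while its right half $4q + 4 > p + 5$ rearranges to $4q > p+1$ (forcing $d_q = 1$). Hence $d = \max(d_p, d_q) = 1$ and $A_S(2) = \log_2(2^n-1)$, the best possible value, uniformly over all $(a,b,c)$.

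Since every substantive estimate --- the ruling out of prime divisors of $d^\ast$, the order argument on $2 \pmod{\pi}$, and the contradiction arising from $\pi \le \frac{1}{2}(pq+9)$ versus $pq \le \frac{1}{2}(\pi-1)$ --- has already been carried out inside Lemma \ref{le2}, I expect no real obstacle here: the theorem is a bookkeeping consolidation of that lemma. The only places demanding care are the sign accounting in the gcd rewriting (keeping $(-1)^{2a}=1$ and the cross terms $(-1)^{a+c}$, $(-1)^{a+b}$ straight) and the verification that the lone hypothesis $16p > 4q+4 > p+5$ indeed implies both separate inequalities of Lemma \ref{le2}(2).
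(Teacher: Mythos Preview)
Your proposal is correct and matches the paper's approach exactly: the paper presents Theorem~2 with the preface ``In summary we have the following result'' and gives no separate proof, so it is intended as an immediate repackaging of Lemma~\ref{le2}. Your gcd sign-rewriting via multiplication by $(-1)^a$ (resp.\ $(-1)^b$) and your unpacking of $16p>4q+4>p+5$ into the two hypotheses of Lemma~\ref{le2}(2) are precisely the bookkeeping steps the paper leaves implicit.
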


\begin{remark}
At the end of Section \ref{sec-auto} we list the sequences $S=S(a,b,c)$ with ideal or optimal autocorrelation.\ For all of such sequences, $q=p+2$ or $p+4$, so $p$ and $q$ satisfy $16p>4q+4>p+5$.\ By Theorem 2 (B), the  $2$-adic complexity of all such sequences reaches the best value $\log_2(2^{pq}-1).$
\end{remark}

\end{document}